\newcommand{\beq}{\begin{equation}}
\newcommand{\eeq}{\end{equation}}
\newcommand{\beqn}{\begin{equation*}}
\newcommand{\eeqn}{\end{equation*}}
\renewcommand{\cite}[1]{\citet{#1}}
\newcommand{\refer}[1]{(\ref{#1})}
\newcommand{\R}{\mathbb{R}}
\newcommand{\ssection}[1]{\section[#1]{\centering #1}}
\newcommand{\mode}{\mbox{$M$} }
\newtheorem{theorem}{Theorem}
\newtheorem{example}{Example}
\newtheorem{assumption}{Assumption}
\newtheorem{lemma}[theorem]{Lemma}
\newtheorem{remark}[theorem]{Remark}
\newcommand{\D}{\mathcal{D}}
\newcommand{\change}[1]{{\color{black} #1}}
\newcommand{\dsum}{\displaystyle \sum}
\DeclareMathOperator{\Unif}{\operatorname{Uniform}}
\title{\Large Computing semiparametric bounds on the expected payments of insurance instruments via column generation\footnote{This work has been carried out thanks to funding from the Casualty Actuarial Society, Actuarial Foundation
of Canada, and Society of Actuaries}}
\author[1]{Robert Howley\thanks{howley.robert@gmail.com}}
\author[1]{Robert Storer\thanks{rhs2@lehigh.edu}}
\author[2]{Juan Vera\thanks{j.c.veralizcano@uvt.nl}}
\author[1]{Luis F. Zuluaga\thanks{luis.zuluaga@lehigh.edu, Corresponding author.}}
\affil[1]{Department of Industrial and Systems Engineering, Lehigh University \authorcr H.S. Mohler Laboratory, 200 West Packer Avenue, Bethlehem, PA 18015}
\affil[2]{Department of Econometrics and Operations Research, Tilburg University \authorcr 5000 LE TILBURG The Netherlands}
\date{}
\begin{document}
\doublespacing
\maketitle

\begin{abstract}
\begin{singlespace}
\noindent It has been recently shown that numerical semiparametric bounds on the expected 
payoff of financial or actuarial instruments
can be computed using semidefinite programming. However, this approach has practical limitations. 
Here we use column generation, a classical optimization technique, to 
address these limitations.
From column generation, it follows that  practical univariate semiparametric bounds can be found
by solving a series of linear programs. In addition to moment information, the column generation approach allows the inclusion of extra information about the random variable; for instance, unimodality and continuity, as well as the construction of corresponding worst/best-case distributions in a simple way.
\end{singlespace}
\end{abstract}

\ssection{Introduction}
Many financial and insurance instruments protect against underlying losses for which it is difficult to make exact distributional assumptions. Under these circumstances, it is difficult to provide a good estimate of the loss distribution, which in turn makes it difficult to estimate payments on the corresponding insured loss. Computing {\em semiparametric bounds} on the expected payments is an approach that has been successfully used to deal with this problem. This involves finding the minimum and maximum expected payments on the insurance instrument, when only partial information (e.g., moments) of the underlying loss distribution is known. For example, consider the work of \cite{cox, jansen,vill}. This approach has also been used to address the estimation of bounds on extreme loss probabilities (\cite{coxzul10}), and the prices of insurance instruments, and financial options (\cite{brock,lo,schepper}). These semiparametric bounds are useful when the structure of the product is too complex to develop analytical or simulation based valuation methods, or when it is difficult to make strong distributional assumptions on the underlying risk factors. Furthermore, even when distributional assumptions can be made, and analytical valuation formulas or simulation based prices can be derived, these bounds are useful to check the consistency of such assumptions.

The semiparametric bound approach is also referred as {\em distributionaly-robust} \citep[see, e.g.,][]{Ye10} or {\em ambiguity-averse} \citep[see, e.g.,][]{Nata11}. Also, it has been shown that this approach partially reflects the manner in which persons naturally make decisions \citep[cf.,][]{Nata11}

In the actuarial science and financial literature, there are two main approaches used to compute semiparametric bounds: analytically, by deriving closed-form formulas for special instances of the problem \citep[see, e.g.,][]{zhang11,cox,schepper}; and numerically, by using {\em semidedefinite programming} techniques \citep[cf.,][]{todd01} to solve general instances of the problem \citep[see, e.g.,][]{bert02, BoylL97,coxzul11,coxzul10}. 
An alternative numerical approach to solve semiparametric bounds proposed in the {\em stochastic
programming} literature by \citet{BirgD91}, based on the classical {\em column generation} (CG)
approach for mathematical optimization problems (see, e.g., \citet[Chp. 22]{Dantzig}, \citet{colgen}), has received little attention in the financial and
actuarial science literature.

Here, we 
consider the use of CG to obtain semiparametric bounds
in the context of financial and actuarial science applications.
In particular, we show that for all practical purposes, univariate semiparametric bounds can be found
by solving a sequence of linear programs associated to the CG {\em master} problem (cf., Section~\ref{sec:method}).
We also show that the CG approach allows the inclusion of extra information about the random variable such as unimodality and continuity, as well as the construction of the corresponding worst/best-case distributions in a simple way.
Also, the CG methodology achieves accurate results at a very small computational cost, it is straightforward to implement,
and the core of its implementation remains the same for very general, and practical instances of semiparametric bound
problems. 

To illustrate the potential of the CG approach,  
in Section~\ref{sec:ex1m2}, semiparametric lower and upper bounds are computed for the loss elimination ratio of a right censored deductible insurance policy, when the underlying risk distribution
is assumed to be unimodal, and have known first and second-order moments.
Furthermore, in Section~\ref{sec:worstcase}, we illustrate how continuous representations of
the worst/best-case distributions associated with the semiparametric bounds can be readily constructed and analyzed.

\ssection{Problem Description}
Consider a random variable $X$ with an unknown underlying distribution $\pi$,
but known support $\D \subseteq \R$ (not necessarily finite), and interval estimates $[\sigma_j^-,\sigma_j^+]$, $j = 1, \ldots, m$ for the expected value
of functions $g_j:\mathbb{R} \rightarrow \mathbb{R}$ for $j=1,\dots,m$ (e.g., typically, $g_j(x) = x^j$). 
The {\em upper semiparametric bound} on the expected value of the (target) function $f:\mathbb{R} \rightarrow \mathbb{R}$ is defined as:
\beq \label{eq:ub}
	\begin{aligned}
		B^*:=~ & \underset{\pi}{\text{sup}} & & E_\pi \left(f(X)\right) \\
		& \text{s.t.} & & \sigma_j^- \leq E_\pi \left(g_j(X)\right) \leq \sigma_j^+ \quad \quad j = 1, \ldots, m \\
		& & & \pi \text{ a probability distribution on } \D,
	\end{aligned}
\eeq
where $E_\pi (\cdot)$ represents the expected value under the distribution $\pi$.
That is, the upper semiparametric bound of the function $f$ is calculated by finding the supremum of $E_\pi\left(f(X)\right)$ across all possible probability distributions $\pi$, with support \change{on the set} $\D$, that satisfy the $2m$ expected value constrains. 
The parameters $\sigma_j^-,\sigma_j^+$, $j=1,\dots,m$ allow for 
confidence interval estimates for the expected value of $E_\pi(g_j(X))$,
that are not typically considered in the 
analytical solution of special instances of \refer{eq:ub} 
(i.e., typically $\sigma_j^- = \sigma_j^+$ in analytical solutions). 

The {\em lower semiparametric bound} of the function $f$ is formulated as the corresponding minimization problem, that is, by changing the $\sup$ to $\inf$ in the objective of \eqref{eq:ub}. 
We will provide details about the solution of the upper semiparametric bound problem \refer{eq:ub} that apply in analogous fashion to the corresponding lower semiparametric bound problem.
Also, for ease of presentation we will at times refer to both the upper and lower bound semiparametric problems using \refer{eq:ub}.

While specific instances of \eqref{eq:ub} have been solved analytically \change{\citep[see, e.g.,][]{lo, cox, schepper, zhang11}}, semidefinite programming (SDP) is currently the main approach
used in the related literature to numerically solve the general problem being considered here
\change{\citep[c.f.,][]{BoylL97,bert02,popescu}} whenever the functions $f(\cdot)$ and $g_j(\cdot)$ are 
piecewise polynomials. 
However, the SDP approach has important limitations in terms of the capacity of practitioners to use it. First, there are no commercially available SDP solvers. Second, the formulation of the SDP that needs to be solved for a given problem is not ``simple'' (\cite{coxzul10}) and must be re-derived for different support sets $\D$ of the distribution of $X$ \citep[see, e.g.,][Proposition 1]{bert02}.

\citet{BirgD91} proposed an alternative numerical method to solve the semiparametric bound
problem \refer{eq:ub} by using a CG
approach (see, e.g., \citet[Chp. 22]{Dantzig}, \citet{colgen}) that has received little attention in the financial and
actuarial science literature. Here, we show that the CG solution approach
addresses the limitations of the SDP solution approach discussed above.
Additional 
advantages of the CG solution approach in contrast to SDP techniques will be discussed at the end of Section~\ref{sec:method}.

\change{It is worth to mention that although in the next section we present the proposed algorithm in pseudo-algorithmic form (e.g., see Algorithm~\ref{fig:alg}), our implementation of the algorithm is available upon request to the authors.}

\ssection{Solution via column generation} \label{sec:method}
In this section we present the CG solution approach proposed
by \citet[Sec. 3]{BirgD91} to solve the semiparametric bound problem \refer{eq:ub}.
For the sake of simplifying the exposition throughout we will assume that \eqref{eq:ub} has a feasible solution, and
that the functions $f(\cdot), g_j(\cdot)$, $j=1,\dots,m$ are \change{Borel measurable in  $\D \subseteq \R$ \citep[cf.,][]{zulu05}}. 
Now let $J\subseteq \D$ be a set of given {\em atoms}, and construct the following linear program (LP) related to \refer{eq:ub} by associating a
probability decision-variable $p_x$ for every $x\in J$:
\beq \label{eq:mj}
\begin{array}{lllllllllllll}
	M_J^* :=~& \max_{p_x} & & \dsum_{x \in J} p_x f(x) \\
	& \text{s.t.} & & \sigma_j^- \leq \dsum_{x \in J} p_x g_j(x) \leq \sigma_j^+ & & j = 1, \ldots, m,\\
	& & & \dsum_{x \in J} p_x =1,  & &\\
	& & & p_x \geq 0 & & \text{for all $x \in J$} \\
\end{array}
\eeq
Furthermore, we assume that the set $J \subseteq \D$ is {\em feasible}; that is, the corresponding LP \refer{eq:mj} is feasible. The existence of such $J \subseteq \D$ follows from the classical result by \citet[Theorem 1]{kemp}, and can be found by solving algorithmically a {\em Phase I} version \citep[cf.,][]{bertlp} of the CG Algorithm~\ref{fig:alg}. 
Following CG terminology, given a set $J \subseteq \D$ we will refer to \refer{eq:mj}  as the {\em master} problem.

\change{Notice that any feasible solution of problem \eqref{eq:mj} will be a feasible (atomic distribution) for problem~\refer{eq:ub}. Also, the objectives of the two problems are the expected value of the function $f(x)$ over the corresponding decision variable distribution. Thus, $M_J^*$ is a lower bound for the optimal value of the upper semiparametric bound problem~\refer{eq:ub}.}
Furthermore, it is possible to iteratively improve this lower bound by updating the set $J \subseteq \D$ using the optimal {\em{dual values}} \citep[cf.,][]{bertlp} of the constrains after the solution of the master problem~\refer{eq:mj}. 
Namely, let $\rho_j^-,\rho_j^+$, $j = 1, \ldots, m$ and $\tau$ be the dual variables of the upper/lower moment 
(i.e., first set of constraints in eq.~\eqref{eq:mj}) and total probability 
(i.e, $\sum_{x \in J} p_x =1$)
constrains
 respectively. Given a feasible set $J \subseteq \D$, the dual variables can be used to select a new point $x \in \D$, to add to $J \subseteq \D$, that will make the corresponding LP \refer{eq:mj} a tighter approximation of \eqref{eq:ub}. In particular, given $\rho_j^-,\rho_j^+$, $j=1,\dots,m$ and $\tau$, consider the following {\em subproblem} to find $x$.
\beq \label{eq:srt}
\begin{aligned}
	S_{\rho, \tau}^* :=~& \underset{x \in \D}{\max} & & f(x) - \tau - \sum_{j=1}^m (\rho_j^+ + \rho_j^-) g_j(x) \\
\end{aligned}
\eeq

The objective value of \eqref{eq:srt} represents the {\em reduced cost} of adding the new point $x$ to $J$; that is, the marginal amount by which the objective in \refer{eq:mj} can be improved with the addition of $x$ in the master problem~\refer{eq:mj}. Using the master problem \eqref{eq:mj} and  subproblem~\eqref{eq:srt} admits an iterative algorithm that (under suitable conditions) converges to the optimal value of \eqref{eq:ub}. More specifically, at each iteration, the master problem \refer{eq:mj} is solved and 
its corresponding dual variables are used in the subproblem~\eqref{eq:srt} to select a new point $x$ to be added to 
the set $J \subseteq \D$. This is called a CG algorithm 
since at each iteration, a new variable $p_x$, corresponding to the new given point $x$,  is added to the master problem~\refer{eq:mj}.

If the functions $f(\cdot)$, $g_j(\cdot), j=1,\dots,m$ are continuous, and the support $\D$ of the 
underlying risk distribution is known to be compact, then the asymptotic convergence of the column generation algorithm follows from \citet[Thm. 5, Chp. 24]{Dantzig}. However, for the numerical solution of the 
practical instances of~\refer{eq:ub} considered here, it suffices to have a ``stopping criteria'' for
the CG algorithm.

\begin{theorem} \label{thm:conv}
\change{
Let $J \subseteq \D$ be given, and
$B^*$, $M^*_{J}$, $S^*_{\rho, \tau}$ be the optimal objective values of \eqref{eq:ub}, \eqref{eq:mj}, and~\eqref{eq:srt} respectively. 
Then $0 \leq B^* - M^*_{J} \leq S^*_{\rho, \tau}$.
}
\end{theorem}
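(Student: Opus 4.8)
The plan is to prove the two inequalities separately. The left one, $0 \le B^* - M_J^*$, is exactly the observation recorded just before the statement: every feasible vector $(p_x)_{x\in J}$ of the master problem \eqref{eq:mj} defines a probability distribution supported on the finite set $J \subseteq \D$ that is feasible for \eqref{eq:ub} and attains the identical objective value. Thus the feasible region of \eqref{eq:mj} embeds into that of \eqref{eq:ub}, and maximizing over the smaller set can only give a smaller value, so $M_J^* \le B^*$.

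For the right inequality I would combine LP duality for the master problem with the definition of the subproblem. Since $J$ is finite and feasible and the feasible region of \eqref{eq:mj} lies in the probability simplex on $J$ (hence is bounded), $M_J^*$ is finite and strong duality applies: there is an optimal dual solution $(\rho_j^-,\rho_j^+)_{j=1}^m,\tau$ with $\rho_j^+\ge 0$, $\rho_j^-\le 0$, and \[ M_J^* = \tau + \sum_{j=1}^m \left(\rho_j^+ \sigma_j^+ + \rho_j^- \sigma_j^-\right). \] These are precisely the dual values passed to \eqref{eq:srt}. By the definition of $S_{\rho,\tau}^*$ as the supremum over all $x\in\D$ of the reduced cost, the pointwise inequality \[ f(x) \le \tau + \sum_{j=1}^m (\rho_j^+ + \rho_j^-)\, g_j(x) + S_{\rho,\tau}^* \] holds for every $x\in\D$.

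Next I would integrate this inequality against an arbitrary distribution $\pi$ feasible for \eqref{eq:ub}. Borel measurability of $f$ and $g_j$ together with finiteness of the moments $E_\pi(g_j(X))$ enforced by the constraints makes the right-hand side $\pi$-integrable, so $E_\pi(f(X))$ is well defined and \[ E_\pi(f(X)) \le \tau + \sum_{j=1}^m (\rho_j^+ + \rho_j^-)\, E_\pi(g_j(X)) + S_{\rho,\tau}^*. \] Here the sign of each dual variable is used with the moment feasibility of $\pi$: since $\rho_j^+\ge 0$ and $E_\pi(g_j(X))\le \sigma_j^+$, while $\rho_j^-\le 0$ and $E_\pi(g_j(X))\ge \sigma_j^-$, each product is bounded above by its dual-objective counterpart, giving $\sum_{j}(\rho_j^+ + \rho_j^-)E_\pi(g_j(X)) \le \sum_{j}(\rho_j^+\sigma_j^+ + \rho_j^-\sigma_j^-)$. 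Substituting and recognizing the dual objective as $M_J^*$ yields $E_\pi(f(X)) \le M_J^* + S_{\rho,\tau}^*$; taking the supremum over all feasible $\pi$ gives $B^* \le M_J^* + S_{\rho,\tau}^*$, which is the claim.

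The main obstacle is the careful bookkeeping of the dual sign conventions for $\rho_j^+$ and $\rho_j^-$, so that the two one-sided moment constraints are recombined correctly both in the dual objective and in the termwise bound on $\sum_j(\rho_j^+ + \rho_j^-)E_\pi(g_j(X))$; this is the step where the structure of the two-sided constraint and of its duals must be tracked precisely. A secondary point to check is that the argument degrades gracefully in the extended reals: because $M_J^*$ is always finite, the chain of inequalities remains valid when $S_{\rho,\tau}^* = +\infty$ or $B^* = +\infty$, in which cases the stated bound holds trivially.
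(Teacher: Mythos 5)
Your proof is correct. Note, however, that the paper does not actually write this argument out: its ``proof'' of Theorem~\ref{thm:conv} consists of the remark preceding the statement (which is exactly your left inequality, $M_J^* \le B^*$, via the embedding of feasible solutions of \eqref{eq:mj} as atomic distributions feasible for \eqref{eq:ub}) together with a citation of \citet[Thm.~3, Chp.~24]{Dantzig} for the nontrivial right inequality. Your duality argument is precisely the content of that cited result, specialized to the present setting: strong duality of the finite, feasible, bounded master LP gives $M_J^* = \tau + \sum_{j=1}^m(\rho_j^+\sigma_j^+ + \rho_j^-\sigma_j^-)$; the definition of $S_{\rho,\tau}^*$ gives the pointwise bound $f(x) \le \tau + \sum_{j=1}^m(\rho_j^+ + \rho_j^-)g_j(x) + S_{\rho,\tau}^*$ on all of $\D$ (not just on $J$); and integrating against an arbitrary feasible $\pi$, using $\rho_j^+ \ge 0$, $\rho_j^- \le 0$ together with the two-sided moment constraints, yields $E_\pi(f(X)) \le M_J^* + S_{\rho,\tau}^*$ and hence $B^* \le M_J^* + S_{\rho,\tau}^*$ after taking the supremum. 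What your write-up buys over the paper's is self-containedness: it makes explicit the dual sign bookkeeping for the two-sided constraints (which the paper's notation $\rho_j^+ + \rho_j^-$ leaves implicit), the fact that strong duality is legitimate because the feasible region of \eqref{eq:mj} sits inside the probability simplex, and the $\pi$-integrability of the majorant needed to pass from the pointwise bound to expectations. The edge cases you flag ($S_{\rho,\tau}^* = +\infty$ or $B^* = +\infty$) are handled correctly and are indeed vacuous for the stated inequality.
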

Theorem~\ref{thm:conv} 
follows from \citet[Thm. 3, Chp. 24]{Dantzig}, and 
states that the LP approximation from below in \eqref{eq:mj} will be within 
$\epsilon$ of the optimal objective of \eqref{eq:ub} if the objective of subproblem \eqref{eq:srt} is less
than $\epsilon$. \change{It is worth mentioning that under additional assumptions about the feasible set of~\eqref{eq:ub}, one has that in the long-run $S^*_{\rho, \tau} \to 0$; that is, the CG algorithm will converge to the optimal solution of  the semiparametric bound problem \eqref{eq:ub}. This follows as a consequence of \citet[Thm. 5, Chp. 24]{Dantzig}.}
In practice, Theorem~\ref{thm:conv} provides a stopping criteria for the implementation of the CG algorithm under
only the assumption of the original
problem~\eqref{eq:ub} being feasible. Specifically, the CG Algorithm~\ref{fig:alg} can be used to find the optimal upper bound $B^*$ up to $\epsilon$-accuracy. As mentioned before, a {\em Phase I} version \citep[cf.,][]{bertlp} of the CG Algorithm~\ref{fig:alg},
can be used to construct an initial feasible set $J_0 \in \D$.


\begin{algorithm}[!htb]
\caption{Semiparametric bounds via column generation}
\begin{algorithmic}[1]
\Procedure{GC}{feasible $J_0$, $\epsilon > 0$}
\State $J \gets J_0$, $S_{\rho, \tau}^* = \infty$
\While{$S_{\rho, \tau}^* > \epsilon$}
\State {\bf compute} $M_{J}^*$, $p^* := \{p^*_x\}_{x\in J}$, the optimal objective and solution of master problem~\eqref{eq:mj}
\State  {\bf compute} $S_{\rho, \tau}^*$, $x^*$ , the optimal objective  and solution of
subproblem \eqref{eq:srt}
\State  $J \gets J \cup \{x^*\}$
\EndWhile
\State  {\bf return}  $J^* = J$, $p^*$, and $M_{J}^* \approx B^*$ (where $\approx$ stands for $M_{J}^*$ approximates $B^*$ )
\EndProcedure
\end{algorithmic}
\label{fig:alg}
\end{algorithm}


Note that (in principle) problem \refer{eq:ub}  has an infinite number of columns (i.e., variables) and a finite number of constrains;
that is, the semiparametric bound problem \eqref{eq:ub} is a {\em semi-infinite} program. Thus, 
the approach outlined above is an application to a semi-infinite program 
of column generation techniques
initially introduced by \citet{DantW61} for LPs, generalized
linear programs, and convex programs (cf. \citet[Chp. 22--24]{Dantzig}).
For a survey of column generation methods see \cite{colgen}.

\subsection{Solving the subproblem}
\label{sec:sub}

As observed by \citet{BirgD91}, the main difficulty in using the CG approach to solve the
semiparametric bound problem~\refer{eq:ub} is that the subproblem~\refer{eq:srt} is in general
a {\em non-convex} optimization problem \citep[cf.,][]{Nocedal2006NO}. However, in the
practically relevant instances of the problem considered here, the following assumption holds.

\begin{assumption}
\label{assume}
The functions $f(\cdot)$, and $g_j(\cdot)$, $j=1,\dots,m$ in \refer{eq:ub} are piecewise polynomials of degree less than five (5).
\end{assumption}

More specifically, typically no higher than fifth-order moment
information on the risk will be assumed to be known (e.g., $g_j(X) = X^j$ for $j=1,\dots,m$ and $m \le 5$). Also, the function $f(\cdot)$ typically defines: the piecewise linear cost or payoff of an  insurance instrument (e.g., $f(x) = \max\{0,x-d\}$); a ruin event using a (piecewise constant) indicator function (e.g., \change{$f(x) = \mathbb{I}_{[0,r]}(x)$}) ; or a lower than fifth-order moment of the risk or insurance policy cost/payoff \citep[see, e.g.][]{brock,cox, lo, schepper}. In such cases, the objective of \eqref{eq:srt} is an univariate
fifth-degree (or lower) piecewise polynomial. Thus, 
subproblem~\eqref{eq:srt}
can be solved ``exactly'' by finding the roots of fourth degree polynomials (using {\em first-order} optimality conditions \citep[see, e.g.,][]{Nocedal2006NO}). As a result, we have the
following remark.

\begin{remark}
\label{rem:LPs}
Under Assumption~\ref{assume} and using the CG Algorithm~\ref{fig:alg}, the solution to problem \refer{eq:ub} can be found by solving a sequence of LPs \refer{eq:mj} where the column updates \refer{eq:srt} can be found with simple arithmetic operations.
\end{remark}

\change{Moreover, thanks to current numerical algorithms for finding roots of univariate polynomials, it is not difficult to solve subproblem~\eqref{eq:srt} numerically to a high precision, even when the polynomials involved in the problem have degree higher than 5. In turn, this means that Algorithm~\ref{fig:alg} would perform well for instances of the problem with high degree polynomials.}

Generating semiparametric bounds using the CG approach outlined above has several key advantages over the semidefinite programming (SDP) solution
approach introduced by \citet{bert02, popescu}. First, only a linear programming solver for \eqref{eq:mj} and the ability to find the roots of polynomials with degree no more than four for \eqref{eq:srt} is required in most practically relevant situations. This means that the methodology can use any commercial LP solver allowing for rapid and numerically stable solutions. Second, the problem does not need to be reformulated for changes in the support $\D$ of the underlying risk distribution $\pi$ of $X$. Accounting for alternate support requires only limiting the search space in the subproblem \eqref{eq:srt}. Finally, problem~\eqref{eq:mj} is explicitly defined in terms of the distribution used to generate the bound value. So, for any bound computed, the {\em worst-case} (resp. {\em best-case} for the lower bound) distribution that yielded that bound can also be analyzed; with the SDP approach no such insight into the resulting distribution is, to the best of our knowledge, readily possible. The ability to analyze the resulting distribution would be of particular use to practitioners in the insurance and risk management industry and will be further discussed in Section~\ref{sec:worstcase}. Third, the CG approach works analogously for both the upper and lower semiparametric bound problems. In contrast, the SDP approach 
commonly results in SDP formulations of the problem that are more involved for the lower than for the upper semiparametric bound. Finally, as shown in Section~\ref{sec:extend}, the CG approach allows the addition of information about the class of distribution to which the underlying risk belongs (e.g., continuous, unimodal) without changing the core of the solution algorithm.

\ssection{Additional Distribution Information} \label{sec:extend}
As mentioned earlier, in practical instances of the semiparametric bound problem~\refer{eq:ub} the functions $g_j(\cdot)$, $j=1,\dots,m$ are typically set to assume the knowledge of moments of the underlying loss distribution; for example, by setting $g_j(X) = X^j$, $j=1,\dots,m$ in \refer{eq:ub}.
The general semiparametric bound problem \refer{eq:ub} can be extended to include additional distributional information other than moments \citep[see, e.g.,][]{popescu,schepper}. This is important as the resulting bounds will be tighter and the corresponding worst/best-case
distribution will have characteristics consistent with the practitioner's application-specific knowledge about continuity, unimodality, and heavy tails in financial loss contexts. In this section it is shown that 
the CG solution approach outlined in Section~\ref{sec:method} for semiparametric bound problems can
be extended to constrain the underlying distribution to 
be unimodal and continuous.

\subsection{Mixture Transformation} \label{sec:transform}

Note that a point  $x\in J^*$ obtained after running the CG Algorithm~\ref{fig:alg} can be interpreted
as the mean of Dirac delta distributions $\delta_{x}$ parametrized by (centered at) $x$. 
In turn, the resulting optimal distribution $\pi^*$ of the random variable $X$ in \refer{eq:mj} is a mixture of Dirac delta distributions;
that is, 
$\pi^* \sim \sum_{x \in J^*} p_{x} \delta_{x}$.
As we show below, the CG Algorithm~\ref{fig:alg} can be used to obtain optimal worst/best-case distributions
associated with the semiparametric bound problem~\refer{eq:ub} when besides the expected value constrains, information 
is known about the class of distribution to which $\pi$ belongs; for example, unimodal, smooth, asymmetric, etc. Basically
this is done by replacing $\delta_x \to H_x$ in the mixture composition of $\pi$, where $H_x$ is an appropriately chosen
distribution parametrized by $x$.

More specifically, assume that besides the moment information used in the definition of the semiparametric
bound problem~\refer{eq:ub}, it is known that the distribution $\pi$ is a mixture of
known probability distributions $H_x$, parametrized by a single parameter $x \in \R$.
For example, $x$ could be the mean of the distribution $H_x$, or $H_x$ could be a
uniform distribution between $0$ and $x$. Note that for any $g : \mathbb{R} \rightarrow \mathbb{R}$, 
it follows from the mixture composition of the distribution $\pi$ that:
\begin{align}
	E_{\pi}(g) &= \int_0^\infty g(u) E_{\pi(X)}(H_X(u) ) du \nonumber \\
	&= E_{\pi(X)} \left ( \int_0^\infty g(u) H_X(u) du\right ) \nonumber \\
	&= E_{\pi(X)}\left ( E_{H_X(U)}(g(U)) \right )\label{eq:mixh}.
\end{align}
This means that with the additional distribution mixture constrain, the associated semiparametric
bound problem can be solved with the CG Algorithm~\ref{fig:alg}
after replacing 
\begin{align}
f(x) &\to \int_0^{\infty} f(u) H_x(u)du = E_{H_x}(f),\nonumber \\
g_j(x) &\to  \int_0^{\infty} g_j(u) H_x(u)du =  E_{H_x}(g_j)\label{eq:trans},
\end{align}
for $j=1,\dots,m$ in \refer{eq:mj}, and \refer{eq:srt}. 

Note that in many instances, the expectations
in \refer{eq:trans} can be computed in closed-form as a function of the mixture
distribution parameter $x$. Moreover, the expectations in~\refer{eq:trans} are commonly
piecewise polynomials in $x$ (e.g., if $H_x$ is a uniform distribution between $0$ and $x$),
or can be written as polynomials after an appropriate change in variable
(e.g., if $H_x$ is a lognormal distribution with chosen volatility parameter $\sigma \in \R^+$, and mean $e^{x+\frac{1}{2}\sigma^2}$). In such cases, after applying the transformation \refer{eq:trans} the 
objective of subproblem~\refer{eq:srt} will be a piecewise polynomial on $x$. As discussed
in Section~\ref{sec:sub}, the subproblem can then be solved ``exactly'', 
and Remark~\ref{rem:LPs} will still hold as long as Assumption~\ref{assume} is valid after
the transformation~\refer{eq:trans}.
This is illustrated in Example~\ref{ex} below. Also, as we show with numerical experiments in Section~\ref{sec:num}, this is the case in most
practical applications.

\change{
\begin{example}
\label{ex}
Consider a simple insurance policy with  no deductilble 
on a loss $X$ for which the non-central  moments up to
$m$-order are assumed to be known. Specifically,  let $f(x) = \max\{0,x\}$, and $g_j(x) = x^j$, $j=1,\dots,m$. Also, assume that the distribution of the loss $X$ is known to be a mixture of uniform distributions $H_x$ of the form $H_x \sim \operatorname{Uniform}(0,x)$ in \refer{eq:mixh}. 
That is, $H_x(u) = \frac{1}{x}\mathbb{I}_{[0,x]}(u)$. From~\refer{eq:trans}, it follows that $E_{H_x}(f) = \frac{x}{2}$,
$E_{H_x}(g_j) = \frac{1}{j+1}x^{j}$ for $j=1,\dots,m$, and Remark~\ref{rem:LPs} will hold for any $m \le 4$.
\end{example}
}

In other cases; that is, when Assumption~\ref{assume} does not hold after the transformation~\refer{eq:trans}, one can sharply approximate the expectations
in \refer{eq:trans} using up to fifth-degree piecewise polynomials on~$x$ to take advantage of
Remark~\ref{rem:LPs}. Alternatively, given that the subproblem~\refer{eq:srt} is an univariate optimization problem,
global optimization solvers such as {\tt BARON} (cf., \cite{ts:05}) can be used to effectively solve it.

In Section~\ref{sec:unimod} we will discuss how the mixture transformation \refer{eq:trans}
can be used to substantially strengthen semiparametric bounds by using reasonable
assumptions about the 
underlying risk distribution regarding unimodality, and or continuity by using
a mixture of appropriate distributions. Moreover, in Section~\ref{sec:worstcase}, we use this transformation
to construct reasonable worst/best-case distributions associated to a given semiparametric
bound problem.

\subsection{Unimodality}
\label{sec:unimod}
\change{In many instances of the semiparametric bound problem, it might be reasonable to assume that the 
unknown distribution $\pi$ of $X$ in \refer{eq:ub} is unimodal with known mode \mode. This is particularly the
case 
when 
the underlying random variable  represents a 
financial asset or a portfolio of financial assets which are typically modelled by
a lognormal distribution when using parametric techniques \citep[see, e.g.][]{schepper}. 
In this section, we discuss how the unimodality information can be used in a straightforward fashion within the CG algorithm solution approach to obtain tighter semiparametric bounds. 
Before discussing this below, it is worth mentioning that there are problems in the context to Actuarial Science where
it is not appropriate to assume unimodality. For example, as shown
in \citet{LeeL10}, this is the 
case when the underlying random variable is associated with Property/Casualty Losses which often exhibit a multimodal
behaviour due to the combination sources compounding the loss (e.g., fire, wind, storms, hurricanes).}

It has been shown by \cite{popescu} that semiparametric bound problems with the additional constrain of the underlying distribution being unimodal can also be reformulated as a SDP by calling upon the classical probability result 
by \cite{khintchine} regarding the representation of unimodal distributions. 
Specifically, \cite{khintchine} proved that any unimodal distribution can be represented by a mixture of uniform distributions each of which have \mode as an endpoint (either the right or left endpoint). This same result can be embedded in the framework of Section~\ref{sec:method} by leveraging the variable transformation of Section~\ref{sec:transform}.

Recall that the CG algorithm can be defined in terms of mixing distributions $H_x$,
where $x$ represents a parameter of the distribution. In particular, for given (mode) \mode$\in \R$, let
\beq \label{eq:unifh} H_x \sim \Unif(\min\{x,\mode\}, \max\{x,\mode\}). \eeq
Using $H_x$ above in \refer{eq:trans} to transform the semiparametric bound problem~\refer{eq:ub} will lead to a bound
over distributions that are unimodal with mode $\mode$. 

The simple transformation \refer{eq:trans} using the mixture of uniforms \refer{eq:unifh} allows the CG approach to leverage the results of \cite{khintchine} while avoiding a complex reformulation as in the case of the SDP methodology of \cite{popescu}. Enforcing unimodality is a straightforward special case that highlights the flexibility of the methodology discussed in Section~\ref{sec:method}.

\subsection{Smoothness and Unimodality}
\label{sec:cont}
The base method of Section~\ref{sec:method} computes the desired semiparametric bounds, and provides a discrete (atomic) worst/best-case distribution $(x,p_x)$ for all $x \in J$ associated with the bound.
In practice it is more desirable and intuitive to work with a continuous probability density function. If one is considering a problem measuring financial loss, then having discrete loss values may not provide the insight that a continuous probability density function would, given that a discrete collection of outcomes is highly unrealistic.
Using the uniform mixture defined in \eqref{eq:unifh} is guaranteed to yield a unimodal distribution in the computation of the semiparametric bounds \refer{eq:ub}. However, the resulting density will contain multiple discontinuities including at the mode itself. Furthermore, the density will only be nonzero over the interval $[\min\{x: x \in J^*\},\max\{x: x \in J^*\}]$; that is, it has finite support. 
It would be desirable to obtain worst/best-case distributions associated with the semiparametric bounds that are {\em smooth}; that is, both continuous and differentiable.

Below we show that by appropriately choosing
the distribution $H_x$ (and its parameters) in the mixture, it is possible to 
obtain worst/best-case distributions that are both smooth and unimodal, and that closely replicate the corresponding upper (best) and lower (worst) semiparametric bounds.
This can be readily done using the CG approach by reformulating the 
semiparametric bound problem~\refer{eq:ub} using the transformation~\refer{eq:trans}, and choosing 
\begin{equation}
\label{eq:logmix}
H_x \sim \operatorname{lognormal}(\mu_x, \sigma_x),
\end{equation}
where $\mu_x, \sigma_x$ are given in terms of $x$ by the equations
\beq
\label{eq:lnparam}
\begin{array}{rl}
e^{\mu_x+ \frac{1}{2}\sigma_x^2} & = x,\\
(e^{\sigma_x^2}-1)e^{2\mu_x +\sigma_x^2} & = \alpha^2.\\
\end{array}
\eeq
for a given $\alpha \in \R^+$. That is, the lognormal distribution $H_x$ is set to have a mean of $x$, and variance~$\alpha^2$.
Note that besides the mean parameter $x$, which will be used to construct the mixture using the CG Algorithm~\ref{fig:alg}, one needs to set a second parameter $\alpha$ in \refer{eq:lnparam} to properly define the 
lognormal distribution $H_x$ in \refer{eq:logmix}.

The  lognormal mixture approach (i.e., \refer{eq:logmix}, and \refer{eq:trans}) can be used to obtain solutions to the semiparametric bound problem \refer{eq:ub} where the underlying worst/best-case distribution is both unimodal, and smooth,
and replicates as close as possible the semiparametic bound obtained when the distribution is assumed to be unimodal (and not necessarily smooth). This is in part thanks to the additional degree of freedom given by the choice of the parameter $\alpha$ in \refer{eq:lnparam}.
\change{To see this, let us refer to 
\beq
\label{eq:moms}
\begin{array}{l}
\tilde{\mu}^+ := \displaystyle \sup_{\pi \text{ a p.d. on $\D$}} \{E(X)~|~\sigma_j^- \le E_{\pi}(g_j(X)) \le \sigma_j^+, j=1,\dots,m\},\\
\tilde{\mu}^- :=  \displaystyle \inf_{\pi \text{ a p.d. on $\D$}} \{E(X)~|~\sigma_j^- \le E_{\pi}(g_j(X)) \le \sigma_j^+, j=1,\dots,m\},\\
\tilde{\sigma}^2 :=  \displaystyle \sup_{\pi \text{ a p.d. on $\D$}} \{\operatorname{Var}(X)~|~\sigma_j^- \le E_{\pi}(g_j(X)) \le \sigma_j^+, j=1,\dots,m\},\\
\end{array}
\eeq
and assume that $\tilde{\mu}^+, \tilde{\mu}^-, \tilde{\sigma}^2$ are bounded (i.e., this 
will be the case if $g_j(X) = X^j$, $j=1,\dots,m$, with $m \ge 2$ in \refer{eq:ub}), and that
$\D \subseteq \R^+$ (as in practice). 
Clearly, for the lognormal distribution mixture~\refer{eq:logmix} to be feasible for the semiparametric bound problem~\refer{eq:ub}, $\alpha$ in \refer{eq:lnparam} should be chosen such that $\alpha^2 \le \tilde{\sigma}^2$ to ensure that the variance of the lognormal distribution used for the mixture is less than the
maximum possible variance of the probability distributions $\pi$ associated with the expected value constrains in \refer{eq:ub}.  
Moreover, as $\alpha \to \tilde{\sigma}$. That is, the only feasible solution of the semiparametric bound problem with lognormal mixture would be a (single) lognormal with variance $\alpha^2$, and mean $x$ satisfying $\tilde{\mu}^- \le x \le \tilde{\mu}^+$, which is unimodal. Thus, there exists an $\alpha \in [0,\tilde{\sigma}]$, such that the lognormal mixture obtained with the CG Algorithm~\ref{fig:alg} will be unimodal. To find the value of $\alpha$ such that the lognormal mixture obtained with the CG approach is both unimodal and as close as possible to replicate the semiparametric bound obtained by assuming that the probability distribution $\pi$ in \refer{eq:ub} is unimodal (and not necessarily smooth), one can use the {\em bisection} Algorithm~\ref{fig:unimodbis}.}

\begin{algorithm}[!htb]
\caption{Smooth and unimodal worst/best-case distribution}
\begin{algorithmic}[1]
\Procedure{Bisection}{$0 < \alpha_{lo} < \alpha_{hi} < \tilde{\sigma}$, $\epsilon >0$}
\While{$|\alpha_{hi} - \alpha_{lo}| > \epsilon$}
\State $\alpha_{k} \gets \frac{1}{2}(\alpha_{lo}+\alpha_{hi})$
\State {\bf compute} $J^*$, $p^* := \{p^*_x\}_{x\in J}$, using CG Algorithm~\ref{fig:alg} and $H_x$ in \refer{eq:logmix}
\If{$\pi \sim \sum_{x \in J^*} p^*_x H_x$ is unimodal} \State{$\alpha_{hi} \gets \alpha_k$}
\Else  \State{$\alpha_{lo} \gets \alpha_k$} \EndIf
\EndWhile
\State  {\bf return} $J^*$, $p^*$, $\alpha = \alpha_k$, and $M_{J}^*$
\EndProcedure
\end{algorithmic}
\label{fig:unimodbis}
\end{algorithm}

Note that in the discussion above, the choice of the lognormal distribution is not key. Instead, the same would apply as long as the mixture distribution in \refer{eq:logmix} is smooth, unimodal, and has at least two appropriate degrees of freedom in the choice of parameters (e.g., in case of random variables with support on the whole real line, the normal distribution could be used to form the mixture).
In Section~\ref{sec:worstcase}, we illustrate with a numerical example how the {\em bisection} Algorithm~\ref{fig:unimodbis} can
be used to obtain a smooth and unimodal worst-case distribution that closely replicates the behaviour of the worst-case unimodal distribution.

When using a smooth distribution to define the mixture component $H_x$ in \refer{eq:trans},
it is important to understand the impact of the selection of mixture components $H_x$. Ideally, computing the bound with a mixture of smooth distributions $H_x$ would yield the optimal value across all possible smooth distributions in the semiparametric bound problem~\refer{eq:ub}. Instead, it is the semiparametric bound across all mixtures with components $H_x$. However, in Theorem~\ref{thm:unifcv} below, we show that the optimal semiparametric bounds across all smooth unimodal distributions is the 
same as the one across unimodal distributions. Loosely speaking, this
follows from the fact that the density function of a uniform distribution can be 
arbitrarily approximated by an appropriate smooth density function.


\begin{theorem}
	\label{thm:unifcv}
	The semiparametric bound problem~\refer{eq:ub}, with the additional constrain of the underlying distribution being smooth and unimodal  is equivalent to problem~\refer{eq:ub}, 
with the additional constrain	 of the underlying distribution being unimodal.
\eqref{eq:unifh}.
\end{theorem}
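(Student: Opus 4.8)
The plan is to prove the two problems share the same optimal value by establishing two inequalities separately. Write $B^*_{U}$ for the optimal value of~\refer{eq:ub} restricted to unimodal distributions with mode \mode, and $B^*_{S}$ for its restriction to smooth (continuous and differentiable) unimodal distributions with the same mode. Since every smooth unimodal distribution is in particular unimodal, the smooth feasible set is contained in the unimodal feasible set; for the upper (sup) bound this immediately gives $B^*_{S}\le B^*_{U}$ (and the reversed inequality for the lower (inf) bound). This trivial inclusion is one half of the claimed equivalence, and, as throughout the paper, it suffices to argue the upper bound, the lower bound being analogous.

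For the reverse inequality the plan is a density/approximation argument: I would show that every unimodal $\pi$ feasible for~\refer{eq:ub} is the limit of smooth unimodal distributions $\pi_n$ with $E_{\pi_n}(f)\to E_\pi(f)$ and $E_{\pi_n}(g_j)\to E_\pi(g_j)$, $j=1,\dots,m$. Starting from a $\delta$-optimal $\pi$ for $B^*_U$, I would invoke the Khintchine representation of \cite{khintchine} to write $\pi$ as a mixture of uniforms $H_x\sim\Unif(\min\{x,\mode\},\max\{x,\mode\})$ as in~\refer{eq:unifh}, so that by the mixture identity~\refer{eq:mixh} one has $E_\pi(\phi)=\int E_{H_x}(\phi)\,d\mu(x)$ for $\phi\in\{f,g_1,\dots,g_m\}$. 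Equivalently, and more directly, the density of $\pi$ is nondecreasing on $(-\infty,\mode]$ and nonincreasing on $[\mode,\infty)$, and I would mollify it on each side of the mode. The key structural fact is that convolution of a monotone function with a nonnegative kernel preserves monotonicity, so the mollified left and right pieces stay nondecreasing, resp. nonincreasing; patching them by a short smooth bridge across a shrinking neighborhood of \mode yields a smooth density $p_n$ that is again nondecreasing on $(-\infty,\mode]$ and nonincreasing on $[\mode,\infty)$, hence smooth and unimodal with mode \mode. Because integrals (mixtures) of one-sided monotone densities are again one-sided monotone, this same monotonicity-preservation shows directly that mixtures of smooth unimodal-with-mode-\mode components stay unimodal, which is what makes the construction compatible with the mixture viewpoint of Section~\ref{sec:transform}.

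With $\pi_n$ in hand, the plan is to verify the three limits. Dominated convergence gives $\int \phi\, p_n \to \int\phi\,p$ for each piecewise-polynomial $\phi$ from Assumption~\ref{assume}, provided the tails of $p_n$ are controlled; since the $p_n$ are constructed by us we may keep their supports inside a fixed neighborhood of $\operatorname{supp}(\pi)$ (or dominate the tails of $\pi$), so the bounded-degree moments $E_{\pi_n}(g_j)$ and the objective $E_{\pi_n}(f)$ converge to those of $\pi$. Consequently $\pi_n$ is smooth unimodal with mode \mode, violates the interval moment constraints $[\sigma_j^-,\sigma_j^+]$ by at most $\epsilon_n\to0$, and has $E_{\pi_n}(f)\to E_\pi(f)\ge B^*_U-\delta$; a standard feasibility-margin argument (mixing $\pi_n$ with a fixed strictly feasible reference distribution, or exploiting slack at the optimum) restores exact feasibility at the cost of an $o(1)$ change in the objective, yielding $B^*_S\ge B^*_U-\delta$ for every $\delta>0$, hence $B^*_S\ge B^*_U$. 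Together with the inclusion direction this gives $B^*_S=B^*_U$.

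I expect the main obstacle to be the approximation step itself: producing a genuinely smooth density that keeps the mode exactly at \mode while simultaneously forcing convergence of the (generally unbounded) functions $f$ and $g_j$. The delicate points are that a continuous density cannot be unimodal with mode \mode and supported strictly on one side of \mode, so one cannot smooth the individual Khintchine uniforms in isolation but must smooth (or patch) the full two-sided density; and that one must secure enough uniform integrability for the limits to pass through the mixing integral and perturb the moment constraints only by a vanishing amount. Controlling the smoothness across the peak and the tail behaviour at the same time is where the care is needed; everything else reduces to the monotone-convolution fact and elementary limit arguments.
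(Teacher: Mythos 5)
Your proof has the same skeleton as the paper's---the trivial inclusion of smooth unimodal distributions into unimodal ones, plus an approximation argument showing that the unimodal optimum can be approached by smooth distributions---but the approximation device is genuinely different, and the difference is substantive. The paper takes an optimal unimodal distribution, writes it via \cite{khintchine} as a mixture of uniforms $H_x \sim \Unif(\min\{x,\mode\},\max\{x,\mode\})$, and replaces each component by the logistic-difference density $H^\eta_x$ of \eqref{eq:log}, citing Lemma~\ref{totpro} (smoothness) and Lemma~\ref{l:u} (convergence to the uniform as $\eta\to\infty$); nothing further is checked. You instead mollify the two monotone branches of the unimodal density separately and patch them smoothly across a shrinking neighborhood of the mode. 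Your route makes every approximant unimodal by construction, whereas in the paper's construction unimodality of the smooth mixture $\pi^\eta$ is never verified and is not automatic: each $H^\eta_x$ is a smooth bump whose own mode sits near the midpoint $(x+\mode)/2$ rather than at \mode (your observation that a continuous density supported on one side of \mode cannot be unimodal with mode \mode is exactly right), so one would still need to prove that a weighted sum of such bumps is unimodal for large $\eta$ before the approximant is feasible for the smooth-and-unimodal problem. Similarly, both constructions perturb the moment constraints, and only you address how feasibility is restored; the paper is silent on this, and also tacitly assumes the unimodal optimum is attained, which your use of a $\delta$-optimal distribution avoids. What the paper's route buys is brevity and reuse of machinery ( \eqref{eq:log} is the same family used numerically in Figure~\ref{fig:unifconv}, and the two appendix lemmas do all the work at the level of a single component); what yours buys is an argument that actually lands inside the feasible set of the smooth-and-unimodal problem.

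Two caveats on your own sketch. First, a unimodal distribution need not possess a density: Khintchine's representation permits an atom at \mode (the degenerate component $x=\mode$), so ``mollify the density'' must be read as mollifying the absolutely continuous part and absorbing any atom into your smooth bridge as a tall, narrow spike; your construction accommodates this, but it should be said. Second, your feasibility-margin repair needs something to exploit---either $\sigma_j^- < \sigma_j^+$ or slack at the optimum; in the exact-moment case $\sigma_j^- = \sigma_j^+$ used in the paper's numerical examples there is no strictly feasible reference distribution to mix with, and one needs instead a direct correction of the mixing weights. Neither caveat undermines your approach, and on both points the paper's own proof is at least as incomplete.
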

\begin{proof}
Let $B^*_u$ be the bound corresponding to the semiparametric bound problem~\refer{eq:ub}, 
with the additional constrain	 that the underlying distribution $\pi$ is unimodal.
Note that there exists a distribution $\pi^*$ such
that  $B^*_u := E_{\pi^*}(f(X))$ and $\pi^*$ is a mixture of uniform distributions (cf., Section~\ref{sec:unimod}); that is, with $H_x\sim \Unif(\min\{x,\mode\}, \max\{x,\mode\})$ in \refer{eq:trans}. Now, for $\eta >0$, let $\pi^\eta$ be the mixture obtained after replacing $H_x$ by
\beq \label{eq:log} H^\eta_x(u) = \frac{1}{b(x)-a(x)} \left[\frac{1}{1+e^{-\eta (u-a(x))}} - \frac{1}{1+e^{-\eta (u-b(x))}}\right], \eeq
in the mixture $\pi^*$, where $a = \min\{x, \mode\}$ and $b = \max\{x,\mode\}$, where $\mode$ is the mode of $\pi^*$. The statement follows since by letting $\eta \to \infty$, one obtains a smooth distribution $H^\eta_x$ (see, Lemma~\ref{totpro} in Appendix~\ref{app:approx}) that is arbitrarily close to $H_x$ (see, Lemma~\ref{l:u} in Appendix~\ref{app:approx}) .
\end{proof}

A numerical example to illustrate Theorem~\ref{thm:unifcv} is provided in Section~\ref{sec:worstcase}.

\section{Numerical Illustration}
\label{sec:num}

Problem \eqref{eq:ub} is of particular interest in actuarial science because the target function $f(\cdot)$ in \refer{eq:ub} can take the form of payoffs for common insurance and risk management products for which the distribution of the underlying random loss is {\em ambiguous} \citep[see, e.g.,][for a recent reference]{Ye10, Nata11}; that is, it is not known precisely. Let $X$ represent the loss, and $d$ 
be the deductible associated with an insurance policy on $X$. For example, \citet[][Sections 3.1 and 3.2]{schepper}
provide upper and lower bounds on the expected cost per policy payment $\max(X-d,0)$ when only
up to third-order moment information on the loss distribution is assumed to be known. This is done by solving \refer{eq:ub}  analytically with 
\begin{equation}
\label{eq:ex2}
f(X) =\max(X-d,0), g_j(X) = X^j, \sigma_j^+ = \sigma_j^- \text{ for } j = 1,\dots,m,  \D = \R^+, \text{ and } m=2,3.
\end{equation}
In practice, losses do not exceed certain maximum, say $b$. Taking this into account, \citet[][Proposition 3.2]{cox} provides upper and lower bounds on the expected cost per policy $\max(X-d,0)$ when only the maximum potential loss $b$ and up to second-order moment information on the loss distribution is assumed to be known. Accordingly, this is done 
by solving \refer{eq:ub}  analytically with 
\begin{equation}
\label{eq:ex1}
f(X) =\max(X-d,0), g_j(X) = X^j,\sigma_j^+ = \sigma_j^- \text{ for } j = 1,\dots,m,  \D = [0,b] \subset \R^+, \text{ and } m=2.
\end{equation}

\subsection{Second-order LER Bounds with Unimodality} \label{sec:ex1m2}

Let us reconsider the semiparametric bound on the expected cost per policy defined in~\refer{eq:ex1}. Note that from semiparametric bounds on the expected cost per policy, one can readily obtain bounds on the expected {\em loss elimination ratio} (LER) of the policy. Specifically, note that the expected LER associated with a policy with payoff $\max\{0, X-d\}$ is \citep[cf.,][]{cox}
\beq \label{eq:lerdef} E({\rm LER}(X)) = \frac{E(\min(X,d))}{E(X)} =  \frac{E(X) - E(\max(X-d,0))}{E(X)}.\eeq
Being able to compute bounds on the expected LER would be beneficial for an insurer attempting to set a deductible in cases where the actual loss distribution is ambiguous. For example, 
in \citet[][Section 3]{cox}, the relationship \refer{eq:lerdef} is used to obtain upper and lower semiparametric bounds on the 
expected LER of an insurance policy with deductible, assuming only the knowledge of the mean and variance of the loss, and that the loss cannot exceed a known maximum. 

Another sensible premise is to assume that the loss distribution is unimodal. To illustrate the potential of the CG approach, in Figure~\ref{fig:ler}, we use the mixture transformation of Section~\ref{sec:unimod} to compute upper and lower semiparametric bounds on the insurance policy with payoff $\max\{0, X-d\}$ when the mean $\mu$ and variance $\sigma^2$ of the loss are assumed to be known, and the loss cannot exceed the value of $b$, where $d$ is the policy's deductible, and the loss distribution is also assumed to be unimodal with mode \mode. The results are compared with the analytical formula of \citet[][Section3]{cox} to illustrate the tightening of the bounds obtained by adding the unimodality assumption. Specifically, following \citet[Section 4]{cox}, in Figure~\ref{fig:ler} we set $\mu = 50$, $\sigma = 15$, $b=100$, and $\mode = \{45, 50\}$.

\begin{figure}[ht]
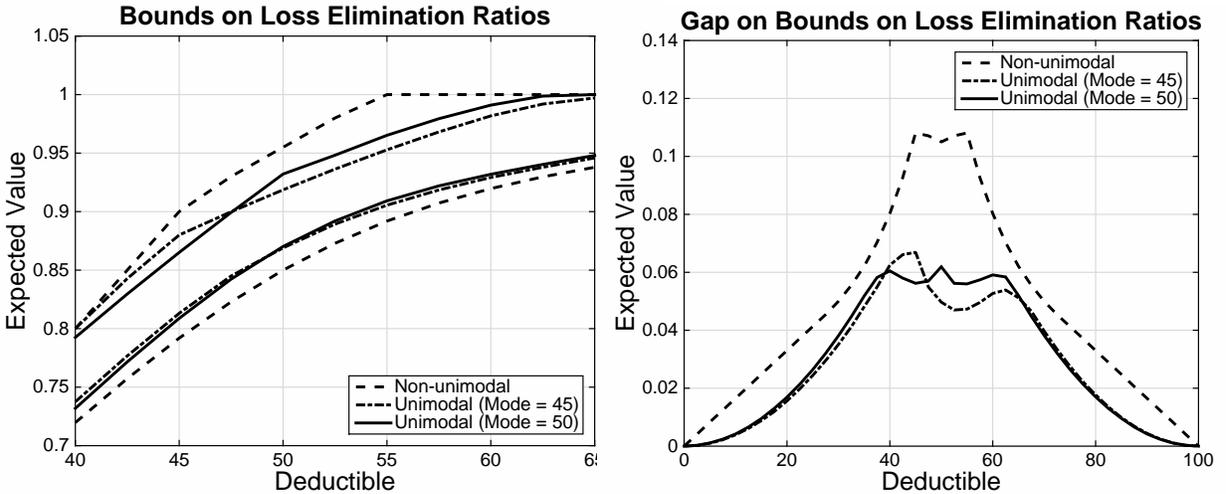

	\centering
	\includegraphics[width = 0.475\textwidth]{Figure1.eps} \includegraphics[width = 0.475\textwidth]{Figure2.eps} 
	\caption{Expected LER bounds (left) and gaps (right) for different values of the deductible $d$, when the mean $\mu = 50$, and variance 
	$\sigma^2 = 225$ of the underlying loss, as well as its potential maximum value $b=100$, are assumed to be known. Gaps indicates the difference between upper and lower bounds. Results are presented for bounds without the unimodality constrain, and with unimodality constrain with mode $\mode = \{45, 50\}$.}
	\label{fig:ler}
\end{figure}

Observe in Figure~\ref{fig:ler} that the expected LER {\em gap}; that is, the difference between the upper and lower semiparametric expected LER's bounds, is significantly tighter when unimodality is included. When $\mode = 50$ the gap is symmetrical with a small spike at the mean/mode. The case in which $\mode = 45$ yields a corresponding peak in the gap around the mode. 
For either very high or very low
deductible values, the choice of the mode has little impact on the size of the gap. 
Regardless of the mode's value,
the gaps are of similar magnitude and narrower than in the case where the underlying loss distribution is not assumed to be unimodal.

\subsection{Examining worst-case distributions} \label{sec:worstcase}

Suppose we wish to compute the semiparametric upper bound defined by \refer{eq:ex2} with $m=2$. 
Specifically, let the expected value (moment) constrains in \refer{eq:ub} be given by
\beq
\label{eq:const}
E_\pi(X) = \mu, E_\pi(X^2) = \sigma^2 + \mu^2,
\eeq
for given $\mu, \sigma \in \R^+$. That is, both the mean and the variance of the underlying loss distribution
are assumed to be known. In this case, equation \refer{eq:moms} reduces to
$\mu_\pi^+ = \mu_\pi^- = \mu$, and $\sigma_\pi^2 = \sigma^2$.

A closed-form solution for the corresponding semiparametric upper bound problem was derived by  \cite{lo}, where he considers $f$ in \refer{eq:ex2} as the payoff of an European call option with strike $d$. If furthermore, it is assumed that the underlying distribution $\pi$ of the loss (or asset price) is unimodal, the corresponding semiparametric upper bound can be computed using: 
the analytical formula provided by \citet[][Section 3.3]{schepper}, the SDP techniques provided by \citet{popescu}, 
or the uniform mixture  approach of Section~\ref{sec:unimod} with components \eqref{eq:unifh}. The CG uniform mixture method readily provides a worst-case distribution. This distribution, however, is not smooth, has finite support, and is unrealistic as a model for the uncertainty of losses. For this reason we compute upper bounds using smooth mixture compontents in \refer{eq:trans} and inspect the resulting worst-case probability and cumulative density functions. The resulting smooth distribution is then compared to the optimal unimodal uniform mixture distribution. Specifically, we use the lognormal mixture \refer{eq:logmix}. 

In particular, let us sample the values of $\mu, \sigma$ in \refer{eq:const} from a lognormal asset price dynamics model which is also commonly used to model (a non-ambiguous) loss distribution \citep[see, e.g.][]{Cox2004, Jaimungal2006}. Namely, let
$\mu = X_0e^{rT}$, and $\sigma= X_0e^{rT}(e^{\nu^2T}-1)^{\frac12}$ for values of $X_0 = 49.50$, $r = 1\%$, $\nu = 20\%$, and $T = 1$.
Also let $d = X_0$ in \refer{eq:ex2}; that is, consider a policy where the expected value of the loss is equal to the deductible.

The semiparametric upper bound was computed 
using the lognormal mixture \refer{eq:logmix} for different values of $\alpha \in [1, 1.5, \ldots, 20]$ and the percent above the parametric value of the policy based on Black-Scholes formula was plotted in Figure~\ref{fig:byalpha}. The corresponding semiparametric bound without the unimodality assumption (given by \cite{lo}), and unimodal bounds with an uniform mixture from Section~\ref{sec:unimod}   are also plotted for reference.
The bold point in Figure~\ref{fig:byalpha}
 represents the smooth, unimodal bound obtained with the bisection Algorithm~\ref{fig:unimodbis} and a mixture of lognormal distributions. Also, the plot illustrates the point ($\alpha = 11.34$) at which the the bounds obtained by the bisection Algorithm~\ref{fig:unimodbis} and the CG Algorithm~\ref{fig:alg} with a mixture of uniform distributions are equal.

In Figure~\ref{fig:byalpha} one can observe that for extremely low values of $\alpha$, the component distributions of the mixture are very narrow, approaching the pessimistic discrete distribution case associated with closed-form bound of \cite{lo}. We also see that as $\alpha \rightarrow \sigma = 20.4$ the resulting bound distribution converges to the lognormal specified by the Black and Scholes asset pricing framework. This convergence is seen in Figure~\ref{fig:byalpha} since the error goes to zero and the upper bound price equals the analytical Black-Scholes price.

\begin{figure}[ht]
	\centering
	\includegraphics[width = 0.5\textwidth]{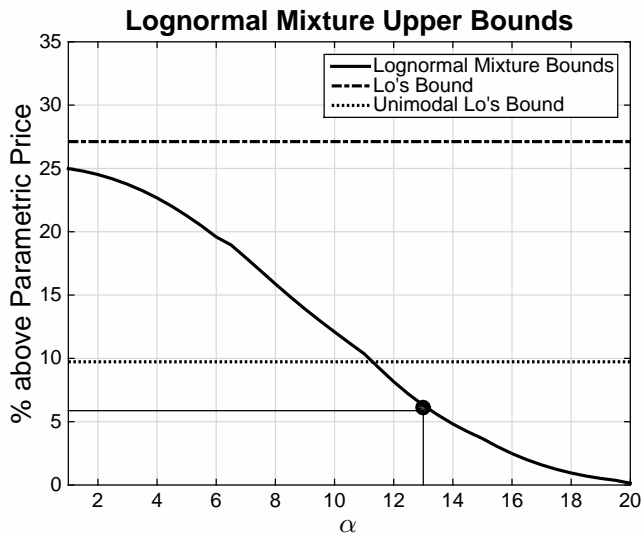}
	\caption{Percentage above the parametric Black and Scholes price of the \citet{lo} upper bound (Lo's Bound) and the lognormal mixtures obtained from Algorithm~\ref{fig:unimodbis} (lognormal Mixture Bounds).  The bold point denotes the value of $\alpha = 13.75$ in which the lognormal mixture bound obtained from Algorithm~\ref{fig:unimodbis} produces a unimodal distribution.} 
	\label{fig:byalpha}
\end{figure}

Figure~\ref{fig:byalpha} also highlights the result discussed in Theorem~\ref{thm:unifcv}. The bound computed using uniform mixture components is greater than the unimodal bound from the lognormal mixture with the gap size under 4\%. Note that the unimodal upper bound using lognormal mixture components occurred at $\alpha = 13.75$. As mentioned before, the $\alpha$ that yields the same bound value as that from the uniform mixture is $\alpha = 11.34$. The smaller the $\alpha$ in the lognormal mixture \refer{eq:logmix} the higher the conservatism associated with the semiparametric bound. Figure~\ref{fig:df} shows the probability distribution function (PDF) and cumulative distribution function (CDF) of the lognormal mixtures at $\alpha = \{11.34, 13.75\}$ as well as the associated {\em true} lognormal distribution with mean and variance equal to the moments used to compute the semiparametric bounds. To highlight the advantage of using the lognormal mixtures instead of the uniform mixtures, Figure~\ref{fig:pcdfpop} shows the optimal PDF and CDF of the latter along with the associated true lognormal distribution.

\begin{figure}[ht]
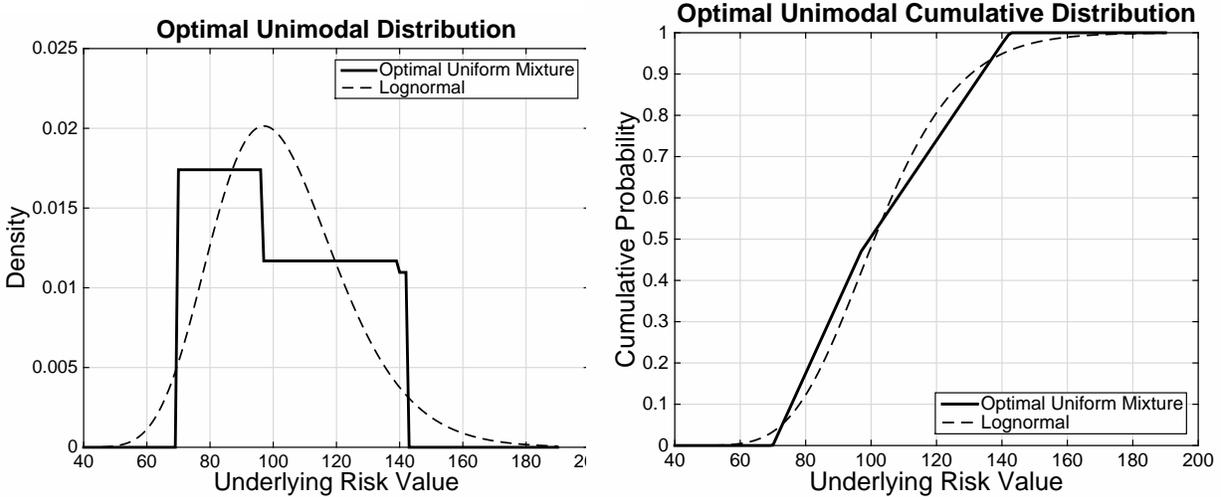

	\centering
	\includegraphics[width = 0.475\textwidth]{Figure4.eps} \includegraphics[width = 0.475\textwidth]{Figure5.eps}
	\caption{PDF and CDF that yields the optimal unimodal bound via uniform mixtures compared with an associated lognormal distribution.}
	\label{fig:pcdfpop}
\end{figure}

\begin{figure}[ht]
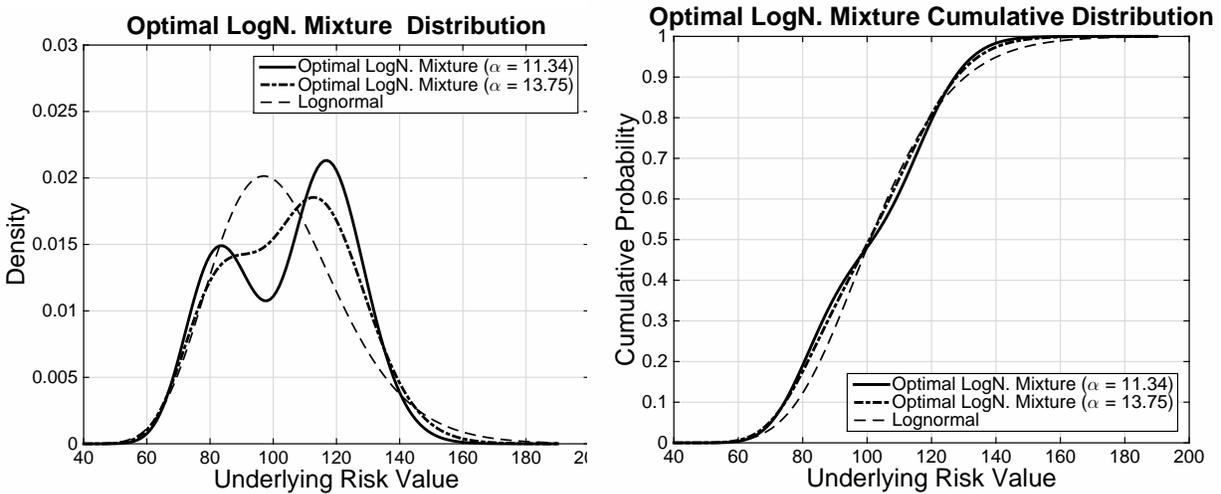

	\centering
	\includegraphics[width = 0.475\textwidth]{Figure6.eps} \includegraphics[width = 0.475\textwidth]{Figure7.eps}
	\caption{PDFs and CDFs that yield the optimal bounds  via lognormal mixtures (cf., Algorithm~\ref{fig:unimodbis}) for $\alpha = \{11.34, 13.75\}$, 
	 compared with an associated lognormal distribution.}
	\label{fig:df}
\end{figure}

Observe in Figure~\ref{fig:df} that the unimodal lognormal mixture at $\alpha = 13.75$ is relatively close to the shape of the associated true lognormal distribution. Contrast this with the PDF of the unimodal mixture of Figure~\ref{fig:pcdfpop} which bares little similarity to the associated true lognormal probability density. The primary difference to note is that the lognormal mixture approach yields an unimodal distribution, but the mode is not specified.
Using the uniform mixture approach of Section~\ref{sec:unimod} will produce a density with a specified mode, but at the cost of an unrealistic distribution. The lognormal mixture at $\sigma = 11.34$ is bimodal and does not resemble the true density. In each case the cumulative densities are fairly close to the true CDF. This example highlights the ability of the lognormal mixture approach to construct realistic unimodal distributions while still being close to the optimal unimodal bound; here the gap was shown to be under 4\%.

\subsection{Illustration of Theorem~\ref{thm:unifcv}}
We finish this section by providing numerical results to illustrate Theorem~\ref{thm:unifcv}. Reconsider the semiparametric bound problem defined in \refer{eq:ex2} with $m=2$ (i.e., with up to second-order moment information),  and the additional constrain that the underlying loss distribution is unimodal. 

Suppose we compute the semiparametric upper bound of the at-the-money policy described in Section~\ref{sec:worstcase} enforcing the first two known moments and continuity. To illustrate Theorem~\ref{thm:unifcv}, the upper bound is also computed for the option using \eqref{eq:log} and various levels of $\eta$. The percentage difference between the former and latter are plotted against $\eta$ in Figure~\ref{fig:unifconv}.

\begin{figure}[H]
	\centering
	\includegraphics[width = 0.5\textwidth]{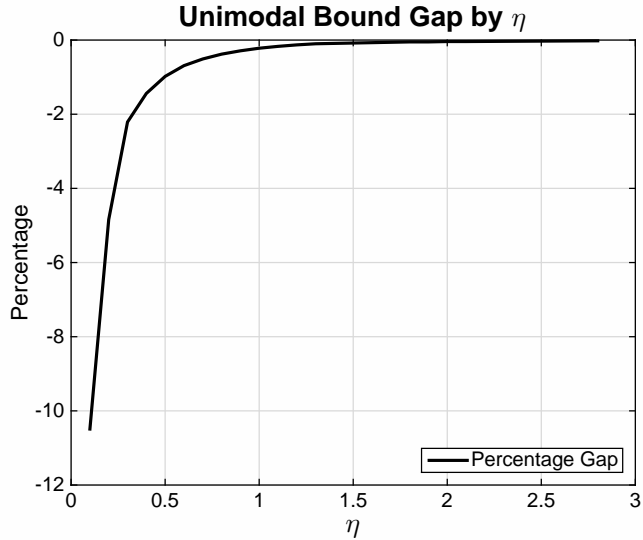}
	\caption{Illustration of how the upper bound with mixture components \eqref{eq:log} converges to the 
	unimodality bound~\eqref{eq:unifh} (without smoothness requirements)
	 as $\eta \rightarrow \infty$. The plot shows the difference in percentage between these bounds 
	 as a function of $\eta$.}
	\label{fig:unifconv}
\end{figure}

From Figure~\ref{fig:unifconv} we see that by implementing the algorithm with \eqref{eq:log} and increasing $\eta$ the upper bound converges to that computed with mixture component \eqref{eq:unifh}. Bounds computed using smoothness 
and unimodality can yield values arbitrarily close to, but not greater than those obtained when only unimodality is enforced. The implication of Theorem~\ref{thm:unifcv} is that any tightening of the upper bound from a smooth mixture is a byproduct of the choice of the mixture distribution $H_x$, not from the inclusion of smoothness. In practice it can be confirmed that the change in bound from choice of $H_x$ is generally fairly small.

\ssection{Extensions}
Besides the common features of insurance policies considered in Section~\ref{sec:num} such as the policy's deductible $d$, and the fact that losses typically do not exceed a maximum loss $b$; a maximum payment, and coinsurance are also common features in insurance policies. If a policy will only cover up to a maximum loss of $u \in R^+$, and coinsurance stipulates that only some proportion $\gamma \in [0,1]$ of the losses will be covered, then the policy's payoff can be written as $f(X) = \gamma[\min(X,u) - \min(X,d)]$. 
All of these policy modifications can be readily incorporated into the CG solution approach framework.
In particular, the CG methodology can be applied to compute bounds on the expected policy loss, for a wide variety of standard functions of loss random variables used in industry.

In the numerical examples in Section~\ref{sec:num}, the target function $f$ in \refer{eq:ub} was used to model piecewise linear insurance policy payoffs, and the functions $g_j$, $j=1,\dots,m$ to use the knowledge of up to m-order non-central moments of the loss distribution.
However,  
the methodology discussed here applies similarly to functional forms of $f$ that are not piecewise linear policy payoffs, and the 
 functions $g_j$, $j=1,\dots,m$ can be used to represent the knowledge of other than non-central moment's
 information. As an example, let $c_j$ be the European call prices on some stock $X$ for various strike prices $K_j$, $j=1,\dots,m$. Recall that the payoff for this type of option is the same as that of the $d$-deductible policy described in \eqref{eq:ex2}. The constrain set for \eqref{eq:ub} can be defined to enforce market prices of options by setting\beq \label{eq:price} E_\pi(g_j(X)) := E_\pi(\max(X-K_j,0)) = c_j, j=1,\dots,m. \eeq
The market price constrains \refer{eq:price} can then be used to compute semiparametric bounds on the variance of the underlying asset. This is accomplished by definining the target function $f(X)$ in \refer{eq:ub} as
\beq \label{eq:risk} \quad E_\pi(f(X)) := E_\pi((X-\mu)^2) \eeq
where $\mu$ is the known first moment of $X$. In \cite{bert02} it was shown 
that computing semiparametric bounds on \refer{eq:risk} using knowledge on the prices \refer{eq:price}
is a useful alternative to the standard methods of computing implied volatilities. For risk management purposes, semiparametric bounds can also be used to compute bounds on one sided deviations of the underlying risk, that is, its semivariance. Each of these common applications readily fit into the framework of the CG methodology presented in Section~\ref{sec:method}, demonstrating the variety of contexts in which the CG approach can be used to compute semiparametric bounds. 

\ssection{Summary}
The CG methodology presented here provides a practical optimization-based algorithm for computing semiparametric bounds on the expected payments of general insurance instruments and financial assets. Section~\ref{sec:method} described how the general problem described in \eqref{eq:ub} can be solved, in most practical instances, by solving a sequence of linear programs that are updated using simple arithmetic operations. The CG approach also readily provides a representation of the worst/best-case distributions associated with a semiparametric bound problem.

To illustrate the potential of the of the CG algorithm semiparametric bounds on the payoff of common insurance policies were computed.
It was also shown that additional distributional information such as continuity and unimodality can be incorporated in the formulation in a straightfoward fashion. The ability to include these constrains provides tighter bounds on the quantity of interest as well as  distributions that fit the practitioner's problem specific knowledge. 
\change{
Note that from the recent work of \citet{LeeL10}, it follows that for some Property/Casualty insurance problems it will be suitable to consider that the underlying random variable follows a distribution that is a mixture of Erlang distributions. The advantage of using mixtures of Erlang distributions is the existence of extremely efficient expectation--maximization (EM) algorithms for parameter estimation from raw data.
This interesting line of work will be the subject of future research work.}

The CG methodology offers a powerful and compelling alternative for computing semiparametric bounds 
in comparison with the main approaches used in the literature to compute them; namely, 
deriving analytical solutions to special cases of the problem or solving it numerically  using semidefinite programming.
This is due to the speed, generality, and ease of implementation of the CG algorithm.
The CG algorithm achieves accurate results at a very small computational cost. The straightforward implementation used for the test problems shown here generated solutions in at most 1-2 seconds. Furthermore, although the examples considered here focused on obtaining semiparametric bounds for insurance policies with piecewise linear payoff given moment information about the underlying loss, 
the CG approach presented here allows for a very general class of univariate semiparametric bounds to be computed using the same 
core the solution algorithm.

\section*{\centering Acknowledgements}
This work has been carried out thanks to funding from the Casualty Actuarial Society, Actuarial Foundation
of Canada, and Society of Actuaries.

\setcounter{equation}{0}
\numberwithin{equation}{section}

\section*{\centering Appendices}

\appendix 


\section{Smooth Approximations of the Uniform Distribution} \label{app:approx}
Let $X$ be a random variable that is uniformly distributed on the interval $[a,b]$. The probability density function (PDF) of $X$ is $f(x) = \frac{1}{b-a}$. It is possible to construct a smooth function that approximates $f(x)$ and is asymptotically equal to the true PDF. We define the following $\eta$ parameterized function.
\beq f_\eta(x) = \frac{1}{b-a} \left[\frac{1}{1+e^{-\eta(x-a)}} - \frac{1}{1+e^{-\eta(x-b)}}\right] \label{eq:uneta} \eeq
The probability function $f_\eta(x)$ is the difference in two shifted logistic functions. \begin{lemma}
\label{totpro}
For any $\eta > 0$, and $a,b \in \R$ such that $b \ge a$, the cumulative probability distribution $F_\eta(x)$ associated with the probability distribution $f_\eta(x)$ is $F_\eta(x) = 1 - \frac{1}{\eta(b-a)} \ln \left ( \frac{1 + e^{-\eta(x-b)}}{1 + e^{-\eta(x-a)}} \right )$. In particular, $\lim_{x \to \infty} F_\eta(x) = 1$, $\lim_{x \to -\infty} F_\eta(x) = 0$, and \change{$\frac{d F_\eta(x)}{d x} \ge 0$ for all $x \in \R$}.
\end{lemma}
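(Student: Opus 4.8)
The plan is to treat this as a direct calculus verification built around the observation that $f_\eta$ is a scaled difference of two shifted logistic functions. Writing $\ell_c(x) := 1/(1+e^{-\eta(x-c)})$ for the logistic centered at $c$, we have $f_\eta(x) = \frac{1}{b-a}\bigl(\ell_a(x) - \ell_b(x)\bigr)$, and the candidate $F_\eta$ in the statement is expressed through terms $\ln(1+e^{-\eta(x-c)})$, which are (up to sign and the factor $\eta$) the natural antiderivatives of the $\ell_c$. Rather than integrate $f_\eta$ from scratch and chase a constant of integration, I would verify the claimed $F_\eta$ directly by differentiation and then pin it down as the cumulative distribution using a single boundary limit.

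First I would differentiate the stated expression. Using $\frac{d}{dx}\ln(1+e^{-\eta(x-c)}) = -\eta\,e^{-\eta(x-c)}/(1+e^{-\eta(x-c)})$ together with the identity $e^{-\eta(x-c)}/(1+e^{-\eta(x-c)}) = 1 - \ell_c(x)$, the factors of $\eta$ and of $b-a$ cancel and the two constant $1$'s cancel, leaving exactly $F_\eta'(x) = \frac{1}{b-a}\bigl(\ell_a(x)-\ell_b(x)\bigr) = f_\eta(x)$. Hence the claimed $F_\eta$ is an antiderivative of $f_\eta$; to identify it with the cumulative distribution $\int_{-\infty}^x f_\eta(t)\,dt$ it then suffices to check the normalization limits.

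Next I would evaluate the two limits. As $x\to\infty$ both $e^{-\eta(x-a)}$ and $e^{-\eta(x-b)}$ tend to $0$, so the log-ratio tends to $\ln 1 = 0$ and $F_\eta(x)\to 1$. The $x\to-\infty$ limit is the one step needing care, and I expect it to be the main (mild) obstacle: both exponentials blow up, so I would factor the dominant term out of numerator and denominator, giving $(1+e^{-\eta(x-b)})/(1+e^{-\eta(x-a)}) \to e^{-\eta(x-b)}/e^{-\eta(x-a)} = e^{\eta(b-a)}$; its logarithm is $\eta(b-a)$, which the prefactor $1/(\eta(b-a))$ turns into $1$, so that $F_\eta(x)\to 1-1 = 0$. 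Combined with $F_\eta' = f_\eta$, this confirms both that $F_\eta(x) = \int_{-\infty}^x f_\eta(t)\,dt$ and that the two stated normalization limits hold.

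Finally, monotonicity follows immediately: $\frac{dF_\eta(x)}{dx} = f_\eta(x) \ge 0$, because for $\eta>0$ the logistic $\ell_c(x)$ is decreasing in its center $c$ (its $c$-derivative is $-\eta\,e^{-\eta(x-c)}/(1+e^{-\eta(x-c)})^2 < 0$), so $a \le b$ yields $\ell_a(x) \ge \ell_b(x)$ and hence $f_\eta(x)\ge 0$ on all of $\R$. Apart from the asymptotic bookkeeping at $-\infty$, every step is a mechanical cancellation. The boundary case $b=a$ is degenerate (the uniform law collapses to a point mass and $f_\eta$ is an indeterminate $0/0$), so throughout I would take $b>a$, consistent with the intended use of approximating the uniform density on a nondegenerate interval.
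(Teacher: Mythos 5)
Your proof is correct. Note that the paper itself states Lemma~\ref{totpro} without any proof at all (it is asserted and then used in the proof of Theorem~\ref{thm:unifcv}, with only Lemma~\ref{l:u} receiving an argument), so there is no paper proof to compare against; your proposal fills a genuine gap in the exposition. The strategy is sound and complete: differentiating the claimed $F_\eta$ via the identity $e^{-\eta(x-c)}/(1+e^{-\eta(x-c)}) = 1 - \ell_c(x)$ does recover $f_\eta$ exactly, the limit at $+\infty$ is immediate, the limit at $-\infty$ follows from factoring out the dominant exponentials to get $e^{\eta(b-a)}$ inside the logarithm, and monotonicity reduces to $f_\eta \ge 0$, which your observation that $\ell_c(x)$ is decreasing in $c$ establishes. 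Pinning down the antiderivative as the CDF via the single boundary condition at $-\infty$ is rigorous (fundamental theorem of calculus plus the normalization limit). Your final remark is also a worthwhile catch: the lemma's hypothesis $b \ge a$ admits the degenerate case $b = a$, for which both $f_\eta$ and the stated $F_\eta$ involve division by zero; restricting to $b > a$, as you do, is the correct reading and arguably a small imprecision in the lemma as stated.
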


In Lemma~\ref{l:u} we show that as $\eta \rightarrow \infty$ \eqref{eq:uneta} will converge to the PDF of $X$.

\begin{lemma}
	\label{l:u}
	As $\eta \rightarrow \infty$ the function $f_\eta(x)$ in \eqref{eq:uneta} converges to a uniform distribution on $[a,b]$.
\end{lemma}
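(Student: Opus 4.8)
The plan is to interpret ``converges to a uniform distribution'' as convergence in distribution, and to prove it by showing that the cumulative distribution function $F_\eta$ supplied by Lemma~\ref{totpro} converges pointwise, as $\eta \to \infty$, to the cumulative distribution function of the uniform law on $[a,b]$. Since that limiting (uniform) distribution function is continuous at every real $x$, pointwise convergence of $F_\eta$ at every point is both attainable and more than enough to conclude weak convergence, so I need not worry about excluding any discontinuity points.

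The first step is to rewrite the logarithmic term of $F_\eta$ as a difference of two \emph{log-sum-exp} expressions. Using $-\eta(x-b)=\eta(b-x)$ and $-\eta(x-a)=\eta(a-x)$, Lemma~\ref{totpro} gives
\[
F_\eta(x) = 1 - \frac{1}{b-a}\left[\frac{1}{\eta}\ln\left(1+e^{\eta(b-x)}\right) - \frac{1}{\eta}\ln\left(1+e^{\eta(a-x)}\right)\right].
\]
The key elementary fact I would invoke is that for every fixed real $s$ one has $\frac{1}{\eta}\ln(1+e^{\eta s}) \to \max(0,s)$ as $\eta\to\infty$; this is immediate by factoring out the dominant exponential when $s>0$ and by the boundedness of $\ln(1+e^{\eta s})$ when $s\le 0$. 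Applying it with $s=b-x$ and $s=a-x$ yields
\[
\lim_{\eta\to\infty} F_\eta(x) = 1 - \frac{1}{b-a}\bigl[\max(0,b-x) - \max(0,a-x)\bigr].
\]

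A short case check on the right-hand side then recovers exactly the uniform distribution function on $[a,b]$: for $x<a$ both maxima are active and the bracket equals $b-a$, giving limit $0$; for $a\le x\le b$ only the first maximum survives and the limit is $(x-a)/(b-a)$; and for $x>b$ both maxima vanish, giving limit $1$. Because this holds at every $x$ (the endpoints $x=a$ and $x=b$ being consistent with the values $0$ and $1$ by continuity of the uniform distribution function), $F_\eta$ converges pointwise to the uniform distribution function, which is the asserted convergence.

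I expect the only genuine obstacle to be the careful identification of the dominant exponential across the three regions $x<a$, $a<x<b$, and $x>b$; the log-sum-exp limit packages this cleanly, so once that one-line fact is recorded the remainder is a routine verification. As a sanity check one may argue at the density level instead: one shows $f_\eta(x)\to\frac{1}{b-a}\indicator{(a,b)}(x)$ pointwise, and, since each $f_\eta$ integrates to $1$ by Lemma~\ref{totpro}, Scheff\'e's lemma upgrades this to $L^1$ convergence of densities, which again implies convergence in distribution.
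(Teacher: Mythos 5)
Your proof is correct, but it takes a genuinely different route from the paper's. The paper argues at the density level: it splits into the three cases $x<a$, $a<x<b$, $x>b$, checks the signs of the exponents $-\eta(x-a)$ and $-\eta(x-b)$ in each region, and concludes that $f_\eta(x)$ in \eqref{eq:uneta} converges pointwise to $\frac{1}{b-a}$ on $(a,b)$ and to $0$ outside $[a,b]$; it then identifies this pointwise limit with the uniform density. You instead work at the distribution-function level, starting from the closed form of $F_\eta$ in Lemma~\ref{totpro}, invoking the log-sum-exp limit $\frac{1}{\eta}\ln(1+e^{\eta s})\to\max(0,s)$, and verifying that the pointwise limit of $F_\eta$ is exactly the uniform CDF. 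What your approach buys is a precise and self-contained notion of convergence: pointwise convergence of $F_\eta$ to a continuous limit CDF is, with no further argument, convergence in distribution, and the endpoints $x=a,b$ cause no trouble. The paper's argument is more elementary (no CDF formula needed beyond \eqref{eq:uneta} itself), but as written it stops at pointwise convergence of densities, which strictly speaking still requires an extra step --- precisely the Scheff\'e upgrade you mention in your closing sanity check --- to yield convergence of the associated distributions; your remark at the end is in effect the rigorous completion of the paper's own route. Both arguments are sound; yours is the tighter one as a statement about distributions.
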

\begin{proof}
	To show that $f_\eta(x) \rightarrow \Unif(a,b)$ as $\eta \rightarrow \infty$ consider three different cases corresponding to three intervals of $x$. First consider the case in which $x < a$. For $x < a$ each of the exponent terms are positive, i.e. $0 < -\eta(x-b) < -\eta(x-a)$, for all $\eta > 0$. So, for $x < a$ we see that $f_\eta(x) \rightarrow 0$. Next we look at $x > b$. In this case each of the exponent terms are negative, which again yields a limit of $0$. Finally, consider $a < x < b$. On this interval the exponent terms satisfy $-\eta(x-a) < 0 < -\eta(x-b)$ for all $\eta > 0$. So, for $a < x < b$ we have $f_\eta(x) \rightarrow \frac{1}{b-a} (1 - 0) = \frac{1}{b-a}$. We can conclude that the limit of $f_\eta(x)$ is as follows
\begin{align*}
	\underset{\eta \rightarrow \infty}{\lim} f_\eta(x) &= \left\{
		\begin{array}{l l}
			0 & x < a \\
			\frac{1}{b-a} & a < x < b \\
			0 & x > b
		\end{array} \right. \\
		&= \Unif(a,b)
\end{align*}
\end{proof}

To demonstrate Lemma~\ref{l:u}, consider $X \sim \Unif(20,30)$. The PDF of $X$ can be approximated using \eqref{eq:uneta} with $a = 20$ and $b = 30$. In Figure~\ref{fig:byeta} we plot the PDF of X as well as the approximation curve for different values of $\eta$.
\begin{figure}[H]
	\centering
	\includegraphics[width = 0.5\textwidth]{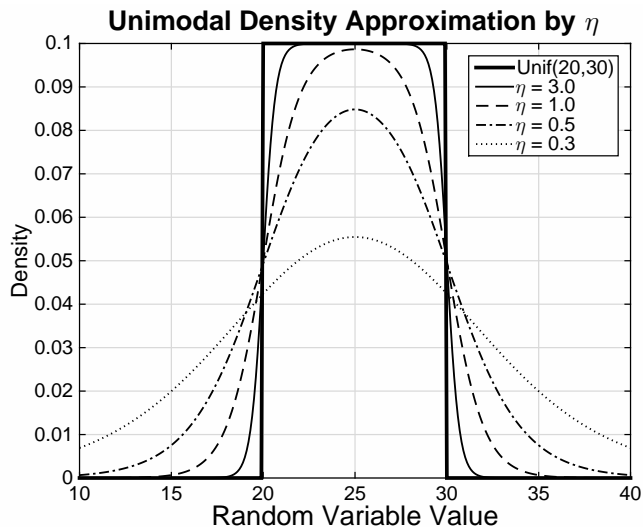}
	\caption{PDF of $\Unif(20,30)$ distribution along with the approximating density \eqref{eq:uneta} for different values of $\eta$.}
	\label{fig:byeta}
\end{figure}
From Figure~\ref{fig:byeta}, we see that as the parameter $\eta$ increases the curve of \eqref{eq:uneta} approaches the PDF of~$X$.


\end{document}